\def\BibTeX{{\rm B\kern-.05em{\sc i\kern-.025em b}\kern-.08em
    T\kern-.1667em\lower.7ex\hbox{E}\kern-.125emX}}
\begin{document}

\title{A Simple and Intuitive Algorithm for Preventing Directory Traversal Attacks}

\author{\IEEEauthorblockN{Michael Flanders}
\IEEEauthorblockA{The University of Texas at Austin}
\textit{Undergraduate, Electrical and Computer Engineering}\\
Austin, TX \\
flanders.michaelk@utexas.edu
}

\maketitle

\begin{abstract}
With web applications becoming a preferred method of presenting graphical user interfaces to users, software vulnerabilities affecting web applications are becoming more and more prevalent and devastating. Some of these vulnerabilities, such as directory traversal attacks, have varying defense mechanisms and mitigations that can be difficult to understand, analyze, and test. Gaps in the testing of these directory traversal defense mechanisms can lead to vulnerabilities that allow attackers to read sensitive data from files or even execute malicious code.

This paper presents an analysis of some currently used directory traversal attack defenses and presents a new, stack-based algorithm to help prevent these attacks by safely canonicalizing user-supplied path strings. The goal of this algorithm is to be small, easy to test, cross-platform compatible, and above all, intuitive. We provide a proof of correctness and verification strategies using symbolic execution for the algorithm. We hope that the algorithm is simple and effective enough to help move developers towards a unified defense against directory traversal attacks.

\end{abstract}

\begin{IEEEkeywords}
directory, path, traversal, vulnerability, algorithm, security, web, application
\end{IEEEkeywords}

\section{Introduction}
In the age of Web 2.0, web application usage is on the rise, and so are web application security vulnerabilities. Between 2016 and 2017, the number of vulnerabilities published to the National Vulnerability Database (NVD) increased by 127\% with web application vulnerabilities making up 51\% of all disclosed vulnerabilities for 2017 \cite{yearlyreport}. Despite a large amount of research and literature on preventing certain vulnerability classes that plague web applications--e.g. SQL injection and cross-site scripting \cite{pixy, sqli, xss}--there does not seem to be much literature on directory traversal vulnerabilities which made up 22\% of 2017’s vulnerability disclosures \cite{yearlyreport}. Furthermore, a lot of material on directory traversal attacks simply recommend using web application scanners to detect these vulnerabilities. There are two problems with this approach: finding vulnerabilities is not the same as fixing them, and many scanners using default configurations fail to detect traversal vulnerabilities \cite{Doupe}.

More effective approaches to preventing directory traversal attacks include whitelisting \cite{directorytraversallecture, SurveySymExec-CSUR18, whitelisting}, removing traversal characters from path strings \cite{thttpd, minihttpd, symlinks}, and canonicalizing path strings to obtain an absolute file name \cite{coreutils, phprealpath, notgethacked}. Trying to prevent these attacks using any one of these solutions presents challenges. For example, whitelisting-only approaches leave holes in defenses when paths contain symbolic links, methods removing traversal characters are often semantically incorrect or have memory-safety issues, and path string canonicalization relies on the filesystem and is often not cross-platform or lightweight.

This paper presents an algorithm that prevents directory traversal attacks and solves the issues presented by the aforementioned mitigations. We use both whitelisting and path string canonicalization to help create a solid defense against directory traversal attacks. The algorithm can handle path strings containing symbolic links and is cross-platform, lightweight, intuitive, and simple to verify using symbolic execution.

More background information on path string semantics, directory traversal vulnerabilities, and currently used mitigations is presented in Section 2. We present our algorithm and a proof-of-correctness in Section 3. We discuss verification strategies for implementations of our algorithm in Section 4. Section 5 details advantages, concerns, and concluding remarks.

\section{Background}

\subsection{Directory Structure and Traversal}

A path string is a string that denotes the location of a file or directory on a system’s file system. Fig. 1 shows a minimal example of a hierarchical file system for a Linux system composed of the root directory, "/",  and the "www" and "home" directories that are housed in the root directory.  The user folder for the user named "alice" can be represented by the path string, "/home/alice/"; this path string starts with the root directory, "/", traverses through the "home" directory, and then selects the "alice" directory inside. 

\begin{figure}[htp]
    \centering
    \includegraphics[width=6cm]{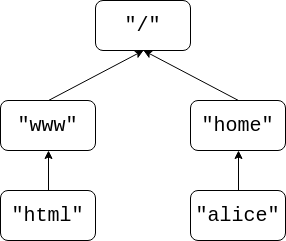}
    \caption{Example directory structure for a Linux system}
    \label{fig:directory_structure_example}
\end{figure}

Some file systems include special characters and directories that allow users to easily traverse this hierarchy. Not represented in Fig. 1 are the hidden "." and ".." directories contained inside each directory. These special names refer to directories relative to the directory that they are inside; the ".." directory really refers to the parent directory of the current directory, and the "." directory refers to the current directory. For example, if we are inside of the "/home/alice" directory, then ".." refers to "/home/" and "." refers to "/home/alice/". 

Attackers routinely use these special directory identifiers to traverse file hierarchies to reach restricted directories. For example, if a program is supposed to use a user-supplied path string to retrieve a file from inside Alice’s home folder, but the user instead supplies a path string of "../../www/html/", then they can potentially access restricted files in other branches of the file system. This is what is commonly known as a directory traversal or path traversal attack.

\subsection{Directory Traversal Vulnerabilities}
Directory traversal vulnerabilities are caused by a program using a user-supplied path string to fetch or download files without checking that a path string refers to the correct file. Web applications are especially vulnerable to directory traversal issues as web apps routinely retrieve and store user-uploaded files on the web application’s file system. 

\begin{figure}[htp]
    \centering
    \includegraphics[width=7cm]{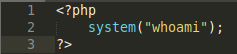}
    \caption{Example malicious PHP file contents}
    \label{fig:example_php_shell}
\end{figure}

For example, consider a PHP web forum written that allows site members to upload profile pictures as jpg files to the endpoint, "myforum.net/userprofile.php?picture=profile.jpg". Suppose instead that a user uploads the code shown in Fig. 2 with a filename of "../../shell.php" or with however many "../" tokens as required to reach the web application’s code directory. Then, an attacker can make a request to this endpoint--i.e. "myforum.net/shell.php"--to execute arbitrary PHP code on the webserver.

\subsection{Currently Used Mitigations}

\subsubsection{Removing Traversal Characters}
Approaches that try to prevent directory traversal attacks by removing substrings such as "../" from the path string normally do so in order to allow flexibility in file selection while blocking a small amount of blacklisted security-critical files. Take for example mini\_httpd--a fork-based HTTP server used to serve the ‘http://routerlogin.net’ page on some NETGEAR routers \cite{minihttpd}--and thttpd--a small, fast, and secure HTTP server that claims to power some popular sites such as images.paypal.com, garfield.com, drudgereport.com, the official website of the Sovereign Principality of Sealand, etc \cite{thttpd}. Both of these HTTP servers are developed and maintained by the same company, ACME Labs \cite{minihttpd, thttpd}, so both use the de\_dotdot algorithm shown in Fig. 3 to remove directory traversal characters from user-supplied path strings. From looking through the source code of both HTTP servers, we presume that the HTTP servers remove directory traversal characters in order to blacklist specific files such as ".htpasswd" and "/etc/passwd" while allowing maximum flexibility in accessing most other files in the current HTTP server’s directory.

While the HTTP servers seem to have no directory traversal related vulnerabilities because of the code shown in Fig. 3, this algorithm, like many other algorithms that try this approach, contains semantic errors. If assuming that the function’s name, de\_dotdot, implies only that all ".." tokens are removed from the pathstring, then the function is semantically incorrect as the path string "/etc../" contains ".." tokens and is untouched by the sanitizations. On the other hand, if assuming that this function means to prevent directory traversal, it is still incorrect as it does not consider path strings starting from the root directory--e.g. "/etc/passwd"--that traverse the file system bottom-up and still resulting in directory traversal vulnerabilities\footnote{We believe that this flaw has been fixed by the most recent patch, version 1.30, to mini\_httpd that addresses CVE-2018-18778 \cite{NVD:softonline, minihttpd}. However, the fix for preventing bottom-up-directory-traversal attacks has been applied outside of the de\_dotdot function meaning that the de\_dotdot method is still semantically incorrect.}.

The problem with these types of approaches is that they are difficult to test and verify. Testers cannot think of all possible malicious path strings to test against, and verification by formal methods or symbolic execution face path explosion problems with the number of string operations contained inside of unbounded loops \cite{SurveySymExec-CSUR18}. Holes in test coverage of algorithms adopting this approach can end up missing directory traversal vulnerabilities causing vulnerabilities such as CVE-2018-18778 in mini\_httpd and thttpd \cite{NVD:softonline, minihttpd, thttpd}.

\begin{figure}[]
\centering
\includegraphics[width=\linewidth]{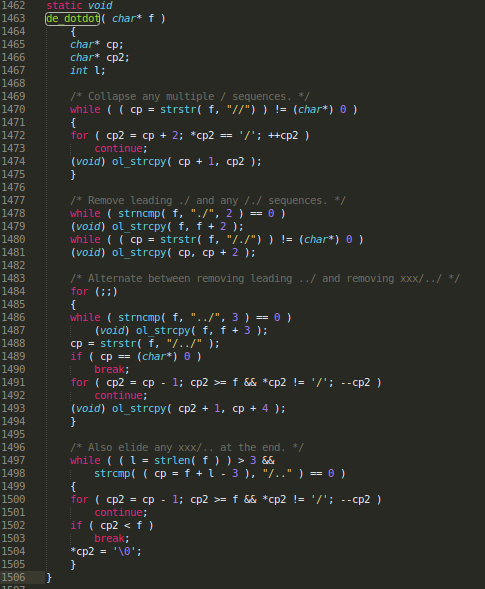}
\caption{Code to remove traversal characters from user-supplied path strings in mini\_httpd and t\_httpd \cite{minihttpd, thttpd} }
\label{fig:mini_httpd_dedotdot}
\end{figure}

\subsubsection{Canonicalizing Path Strings}
Functions that canonicalize path strings such as coreutils’ realpath \cite{coreutils} or the PHP programming language’s realpath \cite{phprealpath} resolve a user-supplied path string to a unique, absolute path string by expanding symbolic links and resolving directory references such as "/./", "/../", and extra "/" characters \cite{coreutils, notgethacked}. This means that even though there are theoretically infinitely many ways to refer to the same file--e.g. "/etc/passwd", "/../etc/passwd", "/../../etc/passwd", "//etc//passwd"…-- after canonicalization, they all equal the same path string. 

These canonicalization functions are effective for those who need to actually canonicalize path strings but overkill for those that only need to prevent directory traversal attacks. Since these functions resolve symbolic links, they must make system calls or at least access the filesystem \cite{symlinks}. This means that the underlying implementation of these functions is often more complex than other approaches and less portable. For example, porting PHP code using realpath to another programming language or web application framework means rewriting the realpath function.

\subsubsection{Whitelisting}
Whitelisting is a very effective approach to preventing directory traversal attacks but has some problems when used alone \cite{whitelisting}. Mentioned before, there are infinitely many ways to refer to the same file on a file system given that the length of the path string is not bounded. While effective in some scenarios, whitelist-only approaches can be overly strict if they do not canonicalize and allow the infinitely many other path strings that refer to a file. Since the specification of whitelisted files often happens during development, it could be difficult for these mitigations to securely adapt to circumstances where the whitelist needs to be updated dynamically. Further, if whitelisted directories contain symbolic links pointing to non-whitelisted directories or files, attackers can abuse these symbolic links to bypass defenses and traverse the file system.

\section{Algorithm}
The main algorithm, shown in Algorithm 1 and named $sanitize$ for ease of reference, is simple and intuitive. The algorithm takes a user-supplied path string and a list of whitelisted path strings as input. It returns true if the canonicalized user-supplied path string is in the whitelisted path string list and false otherwise. An example of an implementation written in the C programming language is available on GitHub \cite{mycode}.

\SetKwInput{KwInput}{Input}
\SetKwInput{KwOutput}{Output}

\begin{algorithm}[]
\caption{$sanitize$}

\KwInput{$userSuppliedPathString$ : $string$}
\KwInput{$whitelistedPathStrings$ : $string array$}
\KwOutput{$true$ or $false$}
\BlankLine
$stack\gets$ a stack initialized to be empty; \\
$tokens\gets$ tokenize($userSuppliedPathString$);
\BlankLine

\For{$i\gets 0$ to $tokens.size - 1$}{
    $currentToken\gets$ $tokens[i]$;
    \If{$currentToken = ".."$}{
        \eIf{$stack.size =$ 0}{
            $continue$;
        }{
            $stack.pop$;
        }
    }
    \uElseIf{$currentToken = "."$}{
        $continue$;
    }
    \Else{
        $stack.push(currentToken)$;
    }
}
\BlankLine
\eIf{$whitelistedPathStrings.contains(stack.toString())$}{
    return $true$;
}{
    return $false$;
}

\end{algorithm}

\begin{algorithm}[]
\caption{toString() method of stack}

$s \gets$ stack object

$result \gets$ "/" \newline

\If{$s.size > $ 0}{
    \For{$i \gets $ 0 to $s.size$}{
        $result.append(s.data[i])$;
        \BlankLine
        \If{$i \neq (s.size - 1)$}{
            $result.append("/")$;
        }
    }
}
\BlankLine
return $result$;

\end{algorithm}

We start by initializing an empty stack object that only needs to support the common push, pop, and size methods as well as one extra method shown in Algorithm 2: $toString$. We tokenize the user-supplied path string on all substrings composed of single or repeated occurrences of the "/" character. For the proof-of-concept C code we provide, we tokenize the user-supplied path string using the $strtok\_r$ function of strings.h \cite{strtokr} with "/" supplied as the separator argument. After tokenization, each token represents a directory in the path string and is pushed to the stack. If a token with value "." is detected, it is ignored, and if a token with value ".." is detected we discard it and pop off the most recently pushed token to the stack.

We make the assumption that user-supplied path strings do not contain symbolic links or wildcard characters such as "?" or "*". If symbolic link functionality is necessary, implementers may add symbolic links to the list of whitelisted path strings instead of trying to modify $sanitize$ to resolve these symbolic links. This alleviates the need for the $sanitize$ function to call into the file system and removes any security holes created by a whitelist-only mitigation procedure.

\subsection{Proof of Correctness}
We start the proof of correctness off with some definitions to formally define and clarify phrases needed in later proofs.

\theoremstyle{plain}
\newtheorem{theorem}{Theorem}
\newtheorem{lemma}{Lemma}
\newtheorem{definition}{Definition}

\theoremstyle{definition}
\begin{definition}
A path string is a string that contains only valid filename characters and the "." and "/" characters and is of the form "/$file_1$/$file_2$/.../$file_n$" where $file_n$ is the filename of the target file. From this point forward, the "..." token in a path string stands for an ellipsis and represents a variable amount of directories and not a directory named "..." or a directory traversal character. 
\end{definition}

\theoremstyle{definition}
\begin{definition}
A path string, $S_1$, is a prefix of another path string, $S_2$, if and only if $S_1.length <= S_2.length$ and for all positive integers, $i, [i : 0 <= i <= S_1.length]$, $S_1[i] = S_2[i]$. We will later refer to stacks being prefixes of another string; this just means that if $K$ is a stack, then $K.toString()$ is a prefix of the other string.
\end{definition}

\theoremstyle{definition}
\begin{definition}
Path strings $S_1$ and $S_2$ are equivalent path strings if and only if $S_1.length = S_2.length$ and $S_1$ and $S_2$ are prefixes of each other. In other words, a typical string comparison of $S_1$ and $S_2$ will say that they are equal strings.
\end{definition}

\theoremstyle{definition}
\begin{definition}
A path string, $S_1$, contains a directory or filename, $file_n$, if and only if the string "/$file_n$/" is a substring of $S_1$. For example, the path string "/etc/passwd" contains the directory "etc" because "/etc/" is a substring of "/etc/passwd". It is not enough to check that the string name of the directory is a substring of the path string since "my\_dir" is a substring of "/my\_dir1/"even though the path string does not contain a directory named "my\_dir".
\end{definition}

\begin{lemma}
If the user-supplied path string is bounded in length, then the algorithm terminates.
\end{lemma}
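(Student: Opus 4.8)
The plan is to show that every component of the algorithm executes for a finite number of steps, so the overall procedure halts. First I would observe that the tokenization step, \texttt{tokenize}($userSuppliedPathString$), runs in time bounded by the length of the input string: a single left-to-right pass over a finite string (using \texttt{strtok\_r} on the separator \texttt{"/"}) produces a finite array $tokens$ whose size is at most $userSuppliedPathString.length$. In particular, $tokens.size$ is a well-defined finite natural number, call it $n$.

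Next I would argue that the main \textbf{for} loop terminates. The loop variable $i$ ranges over $0$ to $n - 1$, and $n$ is fixed before the loop begins and is never modified inside the loop body; hence the loop executes exactly $n$ iterations. Within each iteration, the body performs a constant amount of work: one array access, at most two equality tests of strings against the fixed literals \texttt{".."} and \texttt{"."} (each such comparison is bounded since the tokens have bounded length), and a single stack operation (\texttt{push}, \texttt{pop}, or the no-op \texttt{continue}), each of which is $O(1)$ for the assumed stack implementation. No branch inside the loop body contains a loop or a recursive call, so each iteration halts, and therefore the whole loop halts.

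Then I would handle the post-loop code. The final \textbf{if}-statement calls $stack.toString()$ and $whitelistedPathStrings.contains(\cdot)$. For $toString()$ (Algorithm 2), the loop runs from $0$ to $s.size$, and $s.size$ is bounded by $n$ (each iteration of the main loop pushes at most one item), so it performs finitely many \texttt{append} operations on a string of bounded length and terminates. The \texttt{contains} check scans a finite whitelist array, comparing against strings of bounded length, and so also terminates. Stitching these pieces together --- finite tokenization, then a loop of $n$ constant-time iterations, then a finite \texttt{toString} and a finite \texttt{contains} --- gives termination of $sanitize$, which is the claim.

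I do not expect a serious obstacle here; the only subtlety worth stating carefully is \emph{why} $n = tokens.size$ is finite and fixed, which is exactly where the hypothesis ``bounded in length'' is used: an adversary cannot force an unbounded token stream because the number of tokens is at most the number of characters in the (finite) input. It is also worth remarking explicitly that, unlike the \texttt{de\_dotdot}-style approaches criticized in Section~2, there are no string operations nested inside an unbounded loop, which is precisely what makes this termination argument short and what makes the algorithm amenable to symbolic execution.
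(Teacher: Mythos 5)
Your proof is correct and follows essentially the same approach as the paper: the key observation in both is that the number of loop iterations is bounded by the length of the (bounded) input string, since the token count cannot exceed the character count. You are somewhat more thorough than the paper --- explicitly checking termination of \texttt{tokenize}, the constant-time loop body, \texttt{toString()}, and the \texttt{contains} check --- but this is elaboration of the same argument rather than a different route.
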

\begin{proof}
If the supplied path string has length, $L$, then the loop runs at most $L$ times. Since operating systems normally set a maximum length of path strings, this max length is an upper bound on the number of possible loop iterations. For Linux operating systems, this length is typically 4096 characters.
\end{proof}

\begin{lemma}
If $S_1$ and $S_2$ both refer to the same file on the filesystem, and--without loss of generality--$S_2$ is canonicalized and $S_1$ may or may not be canonicalized, then $S_1.length >= S_2.length$.
\end{lemma}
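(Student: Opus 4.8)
The plan is to exhibit canonicalization as a sequence of rewrites, each of which strictly shortens the path string while preserving the file it names. Since a canonicalized path string admits no such rewrite, and (as noted in Section~2) canonicalization produces a unique result, the canonical form $S_2$ is reachable from $S_1$ by a chain of strictly length-decreasing rewrites, which immediately gives $S_1.length \geq S_2.length$. Concretely I would run an induction on $S_1.length$.

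\emph{Base case.} Suppose $S_1$ contains no "." token, no ".." token, and no repeated "/" — the defining properties of a canonicalized path string once symbolic links are excluded. Then $S_1$ is itself a canonical path string referring to the same file as $S_2$, so by uniqueness of canonical forms $S_1$ and $S_2$ are equivalent path strings (Definition~3); hence $S_1.length = S_2.length$, which is what we want.

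\emph{Inductive step.} If $S_1$ is not canonical it contains at least one removable feature, and I would rewrite a single occurrence of it: a substring "/./" (or a trailing "/.") collapses to "/" (respectively is deleted), removing at least two characters; a doubled slash "//" collapses to "/", removing at least one; or a substring "/$d$/../" in which $d$ is an ordinary directory name collapses to "/", removing at least three characters, while a leading "/../" is simply deleted (the case with nothing to cancel, mirroring the empty-stack branch of Algorithm~1). In every case the resulting string $S_1'$ has $S_1'.length < S_1.length$, and — because "." denotes the current directory, doubled slashes are inert, and, \textbf{invoking the paper's standing assumption that the path contains no symbolic links}, "$d$/.." truly returns to the parent of $d$ — $S_1'$ refers to the same file as $S_1$, hence as $S_2$. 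Applying the induction hypothesis to $S_1'$ gives $S_1'.length \geq S_2.length$, so $S_1.length > S_1'.length \geq S_2.length$.

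\emph{Expected main obstacle.} The delicate point is the "/$d$/../" rewrite: this is exactly where symbolic links would break the argument (a symlinked $d$ need not have "$d$/.." return to its parent, and the canonical name could then be \emph{longer} than $S_1$), so the proof must lean explicitly on the no-symlink assumption already imposed on $sanitize$. A lesser point that still needs care is that a rewrite can \emph{create} new removable features — collapsing the "." in "/a/./../b" exposes a fresh "/a/../" — so the induction has to be carried on $S_1.length$, which every rewrite strictly decreases, rather than on a raw count of "." and ".." tokens.
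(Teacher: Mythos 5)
Your proposal is correct, but it argues the lemma by a different route than the paper. The paper's proof stays close to Algorithm~1: it observes that a non-canonical $S_1$ must contain directories not in $S_2$ (or duplicates of $S_2$'s directories) together with the ``..'' tokens needed to pop them off the stack, so all of $S_2$'s material appears in $S_1$ plus extra tokens, giving $S_1.length \geq S_2.length$ in one sentence; this phrasing feeds directly into the stack-prefix invariant of Lemma~3. You instead treat canonicalization as a terminating string-rewriting system --- collapsing ``/./'', ``//'', ``/$d$/../'' and a leading ``/../'' --- where every rule strictly shortens the string and preserves the named file, and you run a well-founded induction on $S_1.length$, closing the base case with uniqueness of canonical forms. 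Your version is more structured and arguably more rigorous than the paper's: it pinpoints exactly where the no-symlink assumption is used (the ``/$d$/../'' rule) rather than invoking it globally, and it correctly anticipates that a rewrite can expose a new redex, which is why the induction must be on length rather than on a token count. What it gives up is the direct connection to the stack mechanics that the paper reuses in Lemma~3 and the Theorem. Two small points to patch, neither fatal: your rewrite rules should also cover the end-of-string variants (a trailing ``/$d$/..'' with no final slash, analogous to your trailing ``/.'' case), and your base case implicitly identifies ``no `.', no `..', no repeated `/''' with the paper's notion of canonical form, which is fair given Definition~1 (all path strings are absolute) but worth stating, since the uniqueness claim you borrow from Section~2 is what converts ``$S_1$ is canonical'' into ``$S_1$ and $S_2$ are equal-length equivalent strings.''
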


\begin{proof}
In the case that both $S_1$ and $S_2$ are canonicalized, $S_1.length = S_2.length$. However, if $S_1$ is not canonicalized, then $S_1$ must contain directories that $S_2$ does not contain or $S_1$ contains duplicate copies of directories contained in $S_2$ all of which must be popped off the stack by ".." directories. This follows from the assumptions that neither $S_1$ or $S_2$ contain symbolic links and that $S_1$ and $S_2$ both refer to the same file.
\end{proof}

\begin{lemma}
 If $S_1$ and $S_2$ both refer to the same file on the filesystem, and--without loss of generality--$S_2$ is canonicalized but $S_1$ may or may not be canonicalized, then at any iteration, $i$, of the algorithm’s loop, one of the following is true:
 \begin{enumerate}
     \item The stack of $S_1$ is a prefix of $S_2$, or
     \item If a directory is pushed to $S_1$'s stack that makes $S_1$'s stack become not a prefix of $S_2$, then this directory must eventually be popped off the stack by a following ".." directory.
 \end{enumerate}
\end{lemma}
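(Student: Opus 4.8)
The plan is to prove the statement by induction on the loop iteration index $i$, tracking the stack $K_i$ obtained after the first $i$ tokens of $S_1$ have been processed; I read condition 2 as the assertion that whenever $K_i.toString()$ fails to be a prefix of $S_2$, every directory currently on $K_i$ that is responsible for this failure is destined to be popped by some later ".." token, which is the form of the invariant that is actually strong enough to carry the induction. Write $S_2$ in canonical form as "/$b_1$/.../$b_k$". For the base case $i = 0$ the stack is empty, so $K_0.toString()$ equals "/", which is a prefix of every path string (each starts with "/") and in particular of $S_2$; thus condition 1 holds with no offending directory present.

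For the inductive step, assume the invariant holds after iteration $i$ and consider the $i$-th token $t$, following the three branches of Algorithm 1. If $t = $ ".", the stack is unchanged and the invariant is inherited verbatim. If $t = $ "..", then either the stack is empty and unchanged, or its top is popped; in the latter situation I would argue that a pop turns a prefix into a shorter prefix when condition 1 held, and otherwise removes exactly the topmost offending directory while the remaining offending directories stay scheduled for removal, so condition 2 is preserved. The substantive case is $t$ a plain directory name $d$, which gets pushed: if $K_i$ was already not a prefix of $S_2$, then $d$ lies above directories already destined to be popped and hence must be popped before them, so condition 2 persists; if $K_i$ was a prefix and $d$ matches the next directory of $S_2$, then $K_{i+1}$ is still a prefix and condition 1 holds; and if $K_i$ was a prefix but $d$ is the wrong next directory (or $K_i$ had already exhausted $S_2$), then $d$ is precisely the element breaking the prefix property, and the obligation is to show it is eventually popped by a following ".." token.

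That last point is where the real content lies, and I expect it to be the main obstacle. The argument I would give is global: since $S_1$ and $S_2$ refer to the same file and, by hypothesis, neither contains symbolic links, once every token of $S_1$ has been processed the stack must spell out the canonical path of that file, which is $S_2$ itself; therefore any directory that ever occupies a position on the stack inconsistent with $S_2$ has to be gone by the time the loop ends. The only operation in Algorithm 1 that removes a stack element is the pop triggered by a ".." token, and pops obey the stack's LIFO discipline, so the offending $d$ can leave only through a later ".." token, after everything stacked above it is popped first. Lemma 2 furnishes the matching counting fact, $S_1.length \geq S_2.length$, which ensures $S_1$ carries enough surplus directory tokens and ".." tokens to perform all the required pops, so no scheduled pop is ever starved. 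The delicate point is to invoke the "final stack equals $S_2$" fact without circularity, since that is essentially what the eventual correctness theorem establishes; I would handle this by proving the needed instance directly here, making explicit the correspondence between the token sequence of $S_1$ and the directory list $b_1,\dots,b_k$, rather than forward-referencing the theorem.
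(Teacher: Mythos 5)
Your proposal is correct and follows essentially the same line as the paper: an invariant maintained token by token, with the crux being that any directory pushed out of agreement with $S_2$ must later be cancelled by a ".." because $S_1$ and $S_2$ denote the same file and contain no symbolic links. Your explicit induction on the iteration index is a tidier scaffolding than the paper's presentation, which splits into the case where $S_1$ is already canonicalized (prefix property trivially maintained) and the case where it is not, visualized with a red/green coloring of directories. On the one point you flag as delicate, the paper avoids the circularity you are worried about: rather than asserting that the final stack equals $S_2$ (which is indeed what the main theorem later establishes), it argues the contrapositive directly from path-resolution semantics -- if some "red" directory that breaks the prefix property were never popped by a subsequent "..", it would survive in the resolved path of $S_1$, and then $S_1$ and $S_2$ could not refer to the same file, contradicting the hypothesis. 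That is exactly the non-circular grounding your plan needs, so you can replace your "final stack spells out $S_2$" step with this semantic contradiction and keep the rest of your induction unchanged; note that both you and the paper ultimately rest on the same unformalized assumption about how symlink-free paths resolve, so neither argument is more rigorous than the other on that point.
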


\newtheoremstyle{case}{}{}{}{}{}{:}{ }{}
\theoremstyle{case}
\newtheorem{case}{Case}

\begin{proof}
Path string, $S_1$'s, stack starts as "/" which is a prefix of all possible path strings since all path strings are absolute path strings, so $Condition 1$ is met before the loop starts. From here, there are two cases: the case where $S_1$ is canonicalized and the case where $S_1$ is not canonicalized.

 \begin{case}
 Path strings $S_2$ and $S_1$ are both canonicalized. Suppose $S_2$ is the path string, "/$a_1$/$a_2$/.../$a_i$". Then the first directory of $S_1$ pushed to the stack must be $a_1$ keeping $S_1$'s stack a prefix of $S_2$ since "/$a_1$/" is a prefix of "/$a_1$/$a_2$/.../$a_i$". Any next token keeps $S_1$'s stack a prefix of $S_2$, so $Condition 1$ holds. At the end of the loop, the string representation of $S_1$'s stack is equal to the path string $S_2$ meaning that $S_1$ is still a prefix of $S_2$ and that the invariant still holds.
 \end{case}
 
 \begin{case}
 Path string $S_2$ is canonicalized, but path string $S_1$ is not canonicalized. 
We use the coloring scheme presented in Fig. 4 to visualize path strings. A directory is green if it is contained in both $S_1$ and $S_2$ in its correct place in the path string. A directory is red if it is not contained by both $S_1$ and $S_2$ or if it is in the incorrect place in the path string.
 \end{case}
 
  If $S_1$ and $S_2$ both refer to the same file and there are no symbolic links in either, then any red directory that is added must be removed by a matching ".." directory afterward as shown by Fig. 4. By contradiction, if any of these red directories that make $S_1$ become not a prefix of $S_2$ are not popped off, then there is a red directory in the path string of $S_1$ meaning that $S_1$ and $S_2$ could not possibly refer to the same file.

\end{proof}

\begin{figure}[htp]
    \centering
    \includegraphics[width=9cm]{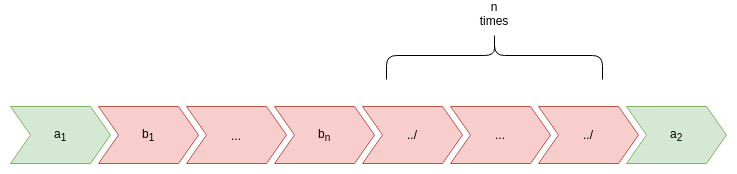}
    \caption{Example of $sanitize$ processing a path string
of which $condition 2$ of $Lemma 3$ holds true.}
    \label{fig:second_case_invariant}
\end{figure}

\begin{figure}[htp]
    \centering
    \includegraphics[width=4cm]{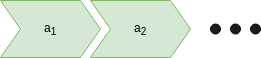}
    \caption{Example of $sanitize$’s stack after
processing/popping all red tiles off.
}
    \label{fig:after_second_case_invariant}
\end{figure}

\begin{theorem}
Assuming that no path string contains symbolic links and that whitelisted path string $S_2$ is canonicalized while user-supplied path string $S_1$ may or may not be canonicalized, then $sanitize(S_1, [S_2])$ is true if and only if path strings $S_1$ and $S_2$ refer to the same file on the file system.
\end{theorem}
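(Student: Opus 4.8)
The plan is to prove the biconditional by establishing the two implications separately, after first isolating a single technical fact that does the real work. Since the whitelist is the singleton list $[S_2]$, the call $sanitize(S_1,[S_2])$ returns $true$ exactly when the stack left at the end of the loop satisfies that $stack.toString()$ is equivalent to $S_2$ in the sense of Definition 3, so the whole argument reduces to relating that final stack to $S_2$. Two preliminary observations make this manageable: first, $stack.toString()$ is always itself a canonical path string, because tokenization on maximal runs of "/" removes duplicate separators, every "." token is skipped, and every ".." token is consumed by a pop, so the output contains no "." token, no ".." token, and no repeated "/"; second, by Lemma 1 the loop always terminates, so "the final stack" is well defined.

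The technical fact I would prove as a separate lemma is that for any path string $S$, the path string $stack.toString()$ produced by running the loop on $S$ refers to the same file as $S$. This is the correctness of the canonicalization discipline: processing a token either descends into a named subdirectory (push), stays put ("." / skip), or ascends to the parent directory (".." / pop), which is exactly how an operating system resolves a path one component at a time when there are no symbolic links. I would argue it by induction on the number of processed tokens, with the invariant that after $i$ tokens the stack names the directory reached by resolving the first $i$ components of $S$; the ".."-at-an-empty-stack case, where the algorithm executes $continue$, matches the filesystem convention that the root is its own parent, and the empty-stack case corresponds to $stack.toString()$ being "/".

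Given this lemma, the direction "$sanitize(S_1,[S_2]) = true$ implies that $S_1$ and $S_2$ refer to the same file" is immediate: $stack.toString()$ is equivalent to $S_2$ and, by the lemma, refers to the same file as $S_1$. For the converse, suppose $S_1$ and $S_2$ refer to the same file. By Lemma 3, at every iteration the stack is either a prefix of $S_2$ or carries some directory that broke the prefix property but will be popped by a later ".." token; at termination there are no later tokens, so every such directory has already been popped and the final stack is a prefix of $S_2$. By the technical lemma, $stack.toString()$ refers to the same file as $S_1$, hence as $S_2$, and since both $stack.toString()$ and $S_2$ are canonical, Lemma 2 forces them to have equal length. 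A prefix of $S_2$ of the same length as $S_2$ is equivalent to $S_2$ by Definition 3, so $whitelistedPathStrings.contains(stack.toString())$ is $true$ and $sanitize$ returns $true$.

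The main obstacle is the technical lemma. Making fully rigorous the statement that the push/skip/pop discipline faithfully models filesystem path resolution requires careful treatment of the edge cases --- ".." at the root, an empty stack rendering as "/", and trailing or repeated "/" collapsing under tokenization --- and this is exactly where the no-symbolic-links hypothesis is indispensable, since a symlink component would make a pop no longer correspond to moving to the parent directory. A secondary obstacle is that the red/green coloring argument underpinning Lemma 3 is presented only informally; tightening it into an induction on token index with a precisely stated invariant is where the remaining care is needed.
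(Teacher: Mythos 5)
Your proposal is correct, but it takes a genuinely different route from the paper. The paper's forward direction is a one-liner (``$contains$ returned true, so the canonicalized $S_1$ equals $S_2$''), and its backward direction argues by working backward from the final loop iteration, doing a case analysis on the last token $A_n$ (a directory of $S_2$, ``..'', or ``.''), excluding the fourth possibility by contradiction via the red/green picture of Lemma 3. You instead factor the whole theorem through one new semantic lemma --- that under the no-symlinks hypothesis the push/skip/pop discipline preserves the file referred to, so $stack.toString()$ always refers to the same file as the input --- proved by induction on processed tokens with the stack-equals-resolved-directory invariant. With that lemma in hand, your forward direction is actually more complete than the paper's, which tacitly assumes that the canonical form referring to the same file as $S_1$ is inherited by $S_1$ itself; your lemma supplies exactly that missing link. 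Your backward direction is also cleaner: you use Lemma 3 only to conclude that the final stack is a prefix of $S_2$ (any prefix-breaking directory must be popped by a later ``..'', and at termination there are no later tokens), then combine your semantic lemma with Lemma 2 to force equal lengths, and equal-length prefixes are equivalent by Definitions 2 and 3, so $contains$ succeeds. The trade-off is that your approach concentrates all the filesystem-semantics content in the new lemma, which you correctly identify as the main remaining obligation (``..'' at the root, the empty stack rendering as ``/'', collapsing repeated separators); the paper never states this lemma explicitly and instead leans on the same idea informally inside Lemma 3's red/green argument, so your decomposition is, if anything, the more rigorous organization of the same underlying facts.
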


\begin{proof}
$ $\newline \newline
$( \Rightarrow ):$ If $sanitize(S_1, [S_2])$ is true, then path strings $S_1$ and $S_2$ refer to the same file on the file system. Since $sanitize(S_1, [S_2])$ returned true, $[S_2].contains(S_1)$ must have returned true. Since the whitelisted path string array, $[S_2]$, contains only path string $S_2$, $S_2$ must be equal to $S_1$.
\newline \newline
$( \Leftarrow ):$ If path strings $S_1$ and $S_2$ refer to the same file on the file system, then $sanitize(S_1, [S_2])$ is true. We work backward to prove this starting at the last iteration of the for loop in the algorithm. We can choose to start at the last iteration because we know by $Lemma 1$ that the algorithm terminates. We know by $Lemma 3$ that one of the two statements in the invariant must hold.

Since we are on the last iteration of the loop and processing the last token, $A_n$, then one of three things is true: \newline

\begin{enumerate}
    \item $A_n$ belongs to $S_2$,
    \item $A_n$ is the token "..", 
    \item or, $A_n$ is the token ".".
\end{enumerate}
$ $\newline
We first prove by contradiction that $A_n$ is one of the above claimed tokens. We suppose that we are on the last iteration of the loop and the last token, $A_n$, is a directory that does not belong to $S_2$ which is the fourth and only other option. However, if $A_n$ doesn’t belong to $S_2$ and $S_1$ and $S_2$ both refer to the same file, then by the red-green example from $Lemma 3$, we must process an additional ".." directory to pop this directory off. Therefore, we could not be on the last iteration of the loop. 

Now we prove that the invariant holds on the three possible cases/token values of $A_n$ proposed before:

\begin{case}
$A_n$ belongs to $S_2$. Since we are processing the last token and $S_1$ and $S_2$ refer to the same file, then $S_1$ remains a prefix of $S_2$. Therefore, the string version of $S_1$'s stack is equal to path string $S_2$, so $sanitize(S_1, [S_2])$ must return true since $[S_2].contains(S_1)$ returns true.
\end{case}

\begin{case}
$A_n$ is the directory ".." If this is the case, then the previous token, $A_{n-1}$ is a red directory not contained by $S_2$. Therefore, the current ".." token must pop this red directory off causing $S_1$ to once again become a prefix of $S_2$. Since this is the last iteration, the string version of $S_1$'s stack is equal to the path string $S_2$ minus the additional “wrong” directory at the end of $S_1$'s stack string. Popping off this last directory makes $sanitize(S_1, [S_2])$ true.
\end{case}

\begin{case}
$A_n$ is the directory ".". If we have a path string, $S_n$, adding a "/./" directory makes $S_n$ still refer to the same file by the $sanitize$ algorithm because "/./" tokens are ignored. This means that the previous token kept $S_1$ a prefix of $S_2$ and the string version of $S_1$'s stack equal to $S_2$, and since "." is ignored, this remains true after this last iteration. Thus, $sanitize(S_1, [S_2])$ returns true.
\end{case}

\end{proof}

Finally, since we use the contain smethod for arrays/lists to check if the path string is whitelisted, the default behavior of $sanitize$ is to disallow all path strings. Then, if the path string, $S_2$, is added to the whitelisted path strings array, then by the theorem proved above, only path strings referring to the same file as $S_2$a re allowed.

\section{Testing and Verification Strategy}

One of the largest advantages and motivations of the $sanitize$ algorithm is its ease of verification. We will show two example verification strategies by first enumerating all possible inputs that lead to $sanitize$ outputting a particular string \cite{SurveySymExec-CSUR18} and second by showing that no traversal characters are left in the output string. For verification, we run KLEE symbolic execution engine \cite{Cadar:2008:KUA:1855741.1855756} on a machine with a dual-core Intel i5-7200U CPU and 8 GB of RAM. The code provided on GitHub \cite{mycode} contains code with all of the tweaks necessary for using KLEE and all output from each symbolic execution run described below.

For ease-of-use, we use a slightly modified version of $sanitize$ that returns the canonicalized copy of the input string instead of a boolean. We also substitute directory and file names in the path string by single-letter names to avoid path explosion problems \cite{SurveySymExec-CSUR18}. For example, we represent the path string "/home/user/.ssh/" by the new path string "/a/b/c/", with "a" representing "home", "b" representing "user", and "c" representing ".ssh". By doing this, we can also get away with symbolically executing $sanitize$ on a much shorter string. In our case, we will use strings with length at most 12 characters. We mark this input string to $sanitize$ as symbolic input using the $klee\_make\_symbolic$ function and tell KLEE to make the last character in the string, path\_string, a null terminator by using the $klee\_assumefunction$. 

For our first example, we can enumerate all input path strings that result in $sanitize$ returning the path string "/a/b/c" by adding the following code to the end of $sanitize$: \newline
$ $\newline
$if (!strncmp ("/a/b/c", canonicalizedString, 6)) \newline 
klee\_assert (0);$ \newline
$ $\newline
and passing KLEE the -emit-all-errors flag. We run KLEE on this modified $sanitize$ code, and the verification runs in about 137.23 seconds. After this run, KLEE presents us with 1142 unique input path strings that cause $sanitize$ to return the path string, "/a/b/c". Looking at some of the 1142 input path strings that result in this path string, most are just the same path string, "/a/b/c", with variable amounts of the "/" character separating the directories--e.g. "//a//b/c", "a/b/c////", "//a//b///c", etc. This is caused by our usage of the c function $strtok\_r$ for tokenizing path strings which can split on contiguous sequences of the desired token character. Other path strings have extra "/./"and "/../" tokens throughout that still keep the output path string equivalent to the input path string--e.g. "a/.//b/c", "../a/b/c", "./a/b/c", etc. By a quick sweep through all 1142 input path strings that result in an output of "/a/b/c", we can see that all of these strings are safe to be allowed through when "/a/b/c" is whitelisted and that there are no semantic issues with $sanitize$.

To verify that no directory traversal characters can bypass our algorithm, we switch out the call to $strncmp$ in our first example for three new calls to $klee\_assert$ where we make one call each to make sure that each of "/./", "/../",and "//" do not appear in the output string. We then compile this new code and run KLEE on it which takes about 323.00 seconds total. Since KLEE emits no errors, this means that of all the possible strings of length at most 12 characters, none can contain directory traversal characters after being ran through $sanitize$.

Because of how quick KLEE runs on the algorithm for path strings of length at most 12 characters, verification of the algorithm is quick enough for implementers to utilize during short development cycles or for regression testing. And while 12 characters might not seem like enough to test most path strings, substituting directory and file names with single character directory names like in the above example is an effective and equivalent way to represent longer path strings and to reduce run time of the symbolic execution engine.

\section{Advantages, Concerns, and Concluding Remarks}

\subsection{Advantages}

There are many advantages of $sanitize$ over other directory traversal defenses. The algorithm is lightweight, easy to implement, easy to verify, self-contained, and easily extendable. While we implemented the algorithm in the C programming language and had to provide our own stack implementation, many programming languages used in popular web application frameworks--e.g. Java, Ruby for Ruby-On-Rails, Python for Django and Flask, etc.--provide their own standard library that includes stacks or other containers implementing the stack interface. This means that the implementation of $sanitize$ could be reduced down to one or two simple functions. Because the algorithm is so simple and there are no calls to any external libraries or file systems, developers of most any skill level can port the algorithm to their language of choice.

Furthermore, since we have formally proved the correctness of $sanitize$, the only burdens an implementer must take on are making sure that their implementation closely follows the algorithm and that there are no other issues such as memory-safety issues, wrongly whitelisted path strings, etc.

The algorithm’s functionality is also simple to extend if an implementer desires any additional functionality. While potentially lowering the security posture of the algorithm, implementers could extend the algorithm to allow special characters such as wildcards for ease-of-use. Another extension is restricting all file operations to the current 'root' of a running web application; in this case, extending the algorithm to prepend the web application root directory’s absolute path to the user-supplied path string before $sanitize$ is ran is simple.

\subsection{Concerns}
While there are many advantages to using $sanitize$, the algorithm does have some disadvantages. For example, implementations of $sanitize$ in non-memory-safe programming languages require large amounts of unsafe operations such as string copies and memory manipulation. Coupling our algorithm with whitelisting presents a solid defense against directory traversal attacks but opens implementers up to other problems such as buffer overflows, out-of-bounds reads and writes, etc. depending upon the implementation. However, this flaw could also be viewed as an advantage, since fuzz-testing for memory-safety issues is much simpler than proving that others’ approaches to canonicalization are semantically correct. 

Another disadvantage is that $sanitize$ does not process path strings in the same way as most operating systems or file systems. The most notable side effect of this is that path strings that the file system may view as invalid can be marked as valid by $sanitize$. Take for example the path string, "/home/NonexistentUserFolder/../ActualUserFolder/". Our algorithm will canonicalize this path string to "/home/ActualUserFolder/" which does exist. However, an approach that uses the file system will try to traverse through the first non-existent user directory, fail immediately, and say that the overall path string or file does not exist.

A similar case results when $sanitize$ encounters path strings ending with the "/./" directory. While file systems only allow this "/./" directory to refer to directories--i.e. "/etc/./" is valid because "/etc/" is a directory, but "/etc/passwd/./" is not valid because "/etc/passwd" is a file--$sanitize$ allows the latter path string as the "/./" is simply ignored in the algorithm’s loop. But since the whitelisted path strings should only contain valid directories and files, these improper directory traversal semantics do not pose problems.

\subsection{Concluding Remarks}

In this paper, we presented a new and simple algorithm for preventing directory traversal attacks through canonicalization of path strings. We analyzed the strengths and weaknesses of some other common mitigations for these attacks and discussed how our algorithm is flexible and lightweight while still remaining more secure than other similar solutions. Finally, we provided a proof of correctness and some simple verification strategies using symbolic execution--available along with implementation code on GitHub--to prove that our algorithm prevents directory traversal attacks when properly used. 

\bibliographystyle{./bibliography/IEEEtran}
\bibliography{./bibliography/references}

\end{document}